    \newcommand\AG[1]{\textcolor{magenta}{[AG: #1]}}
    \newcommand\TD[1]{\textcolor{cyan}{[TD: #1]}}
    \newcommand\KG[1]{\textcolor{blue}{[KG: #1]}}
    \newcommand\IJO[1]{\textcolor{cyan}{[IJ: #1]}}
    \newcommand\AR[1]{\textcolor{red}{[AR: #1]}}
    \newcommand\AT[1]{\textcolor{orange}{[AT: #1]}}
    \newcommand\AG[1]{}
    \newcommand\TD[1]{}
    \newcommand\KG[1]{}
    \newcommand\IJO[1]{}
    \newcommand\AR[1]{}
    \newcommand\AT[1]{}
\newtheorem{lemma}{Lemma}[section]
\newtheorem{observation}[lemma]{Observation}
  \newlength{\defbaselineskip}
  \title{Tetrahedralization of a Hexahedral Mesh}
   \author[$\dagger$]{Aman Timalsina}
   \author[$\ddagger$]{Matthew Knepley}
   \affil[$\dagger$]{Department of Computer Science, Purdue University}
   \affil[$\dagger$]{{\texttt{atimalsi@purdue.edu}}}
   \affil[$\ddagger$]{Department of Computer Science and Engineering, University at Buffalo}
   \affil[$\ddagger$]{{\texttt{knepley@buffalo.edu}}}
  \date{}
  \title{Tetrahedralization of a Hexahedral Mesh}
  \author{%
    David S.~Hippocampus\thanks{Use footnote for providing further information
      about author (webpage, alternative address)---\emph{not} for acknowledging
    funding agencies.} \\
    Department of Computer Science\\
    Cranberry-Lemon University\\
    Pittsburgh, PA 15213 \\
    \texttt{hippo@cs.cranberry-lemon.edu} \\
  }
\begin{document}

\maketitle
\title{\uppercase{}}
\date{
}

\abstract{
Two important classes of three-dimensional elements in computational meshes are hexahedra and tetrahedra. While several efficient methods exist that convert a hexahedral element to tetrahedral elements, the existing algorithms for tetrahedralization of a hexahedral complex limit pre-selection of face divisions. We generalize a procedure for tetrahedralizing triangular prisms to tetrahedralizing cubes, and combine it with certain other triangulation techniques to design an algorithm that can triangulate any hexahedra.}

\iftoggle{arxiv}{}{
\keywords{ finite element, tetrahedra, hexahedra, triangulation}
}

\thispagestyle{empty}
\pagestyle{empty}

\section{Introduction}

Tetrahedralization of hexahedra has several applications: rendering engines may only process tetrahedra, discretization methods may only require tetrahedra, and some geometric algorithms are only phrased over tetrahedra. Thus, it would be advantageous to convert a hexahedral mesh into as few tetrahedra as possible. Several algorithms exist for this purpose including the popular marching tetrahedra algorithm. These algorithms take advantage of the most natural subdivision of a hexahedron into tetrahedra, and thus, have an inherent simplicity in terms of both understanding and implementation. However, these common algorithms, including the marching tetrahedra, impose severe constraints on the input mesh as they cannot guarantee a conforming division of an arbitrary hexahedral complex, due to non-matching face splits \cite{doi1991efficient}.

This work aims to provide a general algorithm that works on any hexahedral mesh with arbitrary face divisions. The major contribution of this work is a clean and intuitive
formulation of this problem and a generalization of several well-known triangulation algorithms \cite{lee2017subdivisions, edelsbrunner2001geometry} allowing us to triangulate any hexahedra into five or six tetrahedra, except in an exceptional, degenerate case where we use twelve tetrahedra.

\section{Background} 
\subsection{Hexahedral Triangulation}
The decomposition of any polyhedra into other simpler polyhedra has been studied for centuries. Despite its longevity, the problem is difficult and even in the case of decomposition into tetrahedra, it is known that tetrahedralization is \(\textsf{NP}\)-hard \cite{ruppert1989difficulty}. These decomposition problems of arbitrary geometric complexes have yielded a rich body of theoretical results that have provided existence conditions on decompositions and bounds on the minimum number of required tetrahedra \cite{chazelle1989triangulating, shewchuk1998condition}.

On the practical side, several algorithms have been developed to perform these subdivisions, but all of these are oblivious to the orientations of the face splits. This, however, becomes a problem when the orientations do not match and merging affects split orientations. For instance, in the commonly used {\it marching tetrahedra} algorithm \cite{doi1991efficient}, each cube is split into six irregular tetrahedra by cutting the cube in half three times, where this division takes place by cutting diagonally through each of the three pairs of opposing faces. In this way, the resulting tetrahedra all share one of the main diagonals of the cube. An obvious limitation of this algorithm, however, is that the cuts are predetermined: that is, we are restricted to select cuts with matching orientations of opposite pairs.

It is evident that the case of non-matching orientations is significant.  Indeed, the theoretical justification we provide for our algorithm has been explored in several other works. The most notable ones include the use of the region-face-graph or RF-graph to study subdivisions of three-dimensional complexes \cite{meshkat2000generating, sokolov2016hexahedral}. These works were primarily in the context of combining tetrahedral mesh into other polyhedra, and underscore the importance of arbitrary tetrahedral subdivisions. Incidentally, \cite{meshkat2000generating} mentions that when using the common algorithms with a predefined set of face cuts, the associated hexahedral triangulation fails to detect all the potential hexahedra in a tetrahedral mesh, and the percentage of missed potential hexahedra may be significant and even reach 5\% of the overall mesh. The graphs themselves have several nice properties including the fact that the RF graph corresponding to hexahedra-tetrahedra decomposition is planar. Indeed, in the case of hexahedra-tetrahedra decompositions, these representations mostly match the arguments we develop, but the works themselves merely specified these subdivisions and did not explicitly provide an algorithm for generating them. In fact, the case of subdivisions in the case of non-matching orientations was not known.

\subsubsection{Prism Decomposition}\label{subsec: prism}
In order to specify the decomposition of arbitrary hexahedra, we first start by discussing the prism decomposition procedure we employ in the first three of our cases. This decomposition is well-known and we choose the framework specified by \cite{erleben2005adaptive} where they provide an algorithm for triangulating a prism by choosing face cuts carefully. In particular, we first define rising (\(\mathbf{R}\)) and falling (\(\mathbf{F}\)) cuts (\cref{fig:rise-fall}). The cuts (\(\mathbf{R}\)) and (\(\mathbf{F}\)) simply depend on
whether the split edge is rising or falling as we travel along the extruded prism face in a counterclockwise manner. 

\begin{figure}[!h]
\begin{center}
\framebox[5.0in]{
\iftoggle{arxiv}{\includegraphics[scale=0.5]{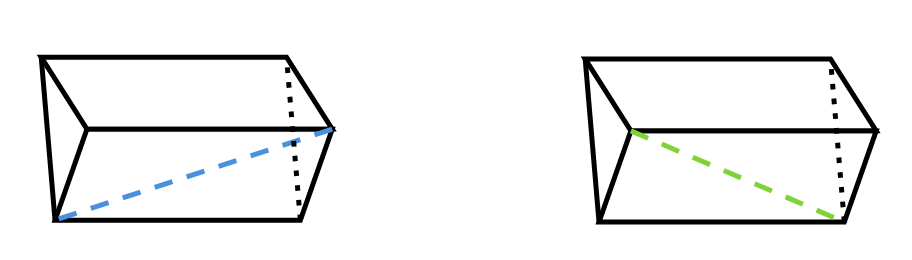}}{
\includegraphics[scale=0.4]{fig/rise-fall-prism.png}}
}
\end{center}
\caption{An instance of (\textit{left}) a rising (\(\mathbf{R}\)) cut and (\textit{right}) a falling (\(\mathbf{F}\)) cut. }
\label{fig:rise-fall}
\end{figure}

\begin{figure}[!h]
\begin{center}
\framebox[5.0in]{
\iftoggle{arxiv}{\includegraphics[scale=0.5]{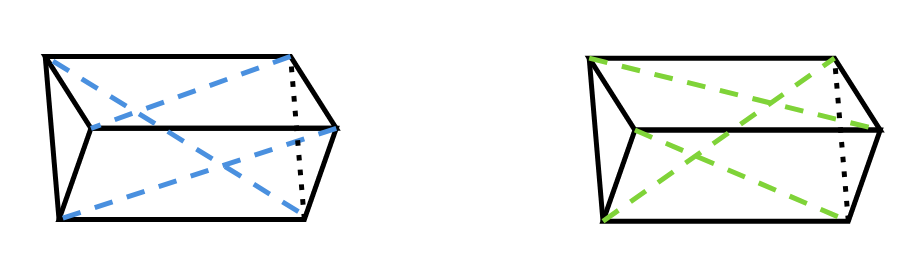}}
{\includegraphics[scale=0.4]{fig/degen-prism.png}}
}
\end{center}
\caption{The degenerate cases: \(\mathbf{RRR}\) cuts (left) and \(\mathbf{FFF}\) cuts (right).}
\label{fig:degen-prism}
\end{figure}
We now claim that the only degenerate cases correspond to instances when all the face cuts are assigned the same orientations, namely \(\mathbf{RRR}\) or \(\mathbf{FFF}\) (\cref{fig:degen-prism}). Note that any other configuration with at least one non-matching orientation guarantees that the face cuts meet at some vertex. Thus, these degenerate configurations definitively characterize the impossibility of triangulating the prism:
\begin{observation}
\label{obs: meet-vertex}
In a tetrahedral decomposition of a prism, it is evident that at least two (exterior) face cuts must meet at some vertex.
\end{observation}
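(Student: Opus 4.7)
The plan is to argue by contradiction, exploiting the fact that the two triangular faces of the prism are already triangles and so each must coincide with a $2$-face of exactly one tetrahedron in any decomposition (without Steiner points). Suppose for contradiction that some tetrahedralization is given in which no two of the three chosen quadrilateral-face diagonals share a vertex. Since each diagonal contributes two endpoints and the prism has only six vertices in total, this assumption forces the six endpoints to be distinct and therefore to cover every prism vertex exactly once.

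Label the bottom triangle $a_1 a_2 a_3$ and the top triangle $b_1 b_2 b_3$ so that each $a_i b_i$ is a vertical edge. The bottom triangle sits on the boundary of the prism, so it is a face of exactly one tetrahedron $T$; let $x$ denote the fourth vertex of $T$. Because we introduce no interior vertices, $x$ must be a top vertex, say $x = b_k$. Then $T$ contains the three edges $a_1 b_k$, $a_2 b_k$, and $a_3 b_k$. Of these, exactly $a_k b_k$ is a vertical prism edge; the remaining two are not edges of the prism, and the only way they can appear in the decomposition is as the chosen diagonals of the two quadrilateral side faces incident to $a_k b_k$. These two diagonals both pass through $b_k$, contradicting the standing assumption.

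This contradiction rules out the $\mathbf{RRR}$ and $\mathbf{FFF}$ configurations of \cref{fig:degen-prism} and establishes \cref{obs: meet-vertex}. The main obstacle I anticipate is a clean articulation of the step that pins down the fourth vertex $x$: it relies on the tacit no-Steiner-point convention, together with the fact that the bottom triangle already bounds the prism, so $x$ cannot lie in the interior of the bottom face nor coincide with one of $a_1,a_2,a_3$ without producing a degenerate tetrahedron. Once that observation is in place, the rest of the argument is an immediate counting check against the three vertical edges of the prism.
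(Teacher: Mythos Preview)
Your proof is correct and follows essentially the same approach as the paper: both arguments isolate the unique tetrahedron $T$ sitting on a triangular face of the prism, identify its apex as a top vertex, and observe that two of the edges joining the apex to the base are necessarily diagonals of adjacent quadrilateral faces. Your contradiction framing and the counting in the first paragraph are unnecessary (the second paragraph already proves the statement directly), and you conclude by exhibiting the shared vertex $b_k$ whereas the paper finishes with a pigeonhole on the two base vertices $u,v$, but the core idea is identical.
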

\begin{proof}
We demonstrate this by examining the triangles of the prism. Consider a tetrahedral decomposition $\mathcal{T}$ of a prism $P$. Let $T$ be a tetrahedron in $\mathcal{T}$, and let $u$, $v$, and $x$ be the vertices of the triangular face of $P$ that is adjacent to $T$. Let $s$, the summit, be the vertex of $T$ that is opposite to this face, as shown in Figure \ref{fig: proof}. 
\begin{figure}[!h]
\centering\framebox[5.0in]{
\begin{tikzpicture}
\draw[thick] (0,0) -- (1,1) -- (1,0) -- cycle;
\draw[thick] (1,1) -- (0.5,2.5);
\draw[dashed, thick, red] (1,0) -- (0.5,2.5);
\draw[dashed, thick, blue] (0,0) -- (0.5,2.5);
\draw[fill=black] (0,0) circle (2pt);
\draw[fill=black] (1,1) circle (2pt);
\draw[fill=black] (1,0) circle (2pt);
\draw[fill=black] (0.5,2.5) circle (2pt);
\node[below left] at (0,0.2) {$\bm{u}$};
\node[above] at (1.1,1) {$\bm{x}$};
\node[below right] at (1,0.2) {$\bm{v}$};
\node[above right] at (0.5,2.2) {$\bm{s}$};
\end{tikzpicture}}
\caption{A tetrahedron $T$ adjacent to a triangular face of the prism $P$.}
\label{fig: proof}
\end{figure}
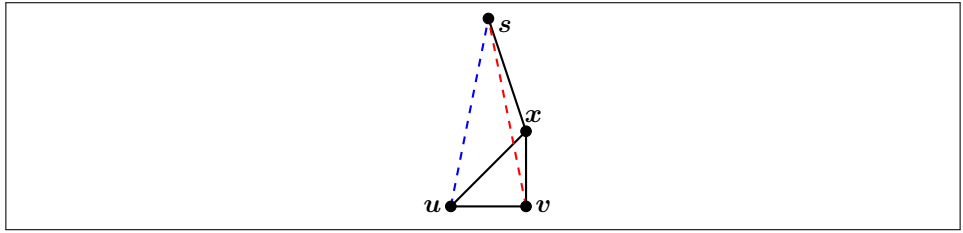
Since $\bm{x}$ can have only three original incident edges, it follows that $\bm{u}$ and $\bm{v}$ must share the three (exterior) face cuts between them corresponding to the three tetrahedra in the triangulation. This means that at least two of these face cuts must meet at a common vertex, which is what the observation states.
\end{proof}

For any of the cuts that are not degenerate (\cref{fig:valid-prism}), a canonical division into three tetrahedra is possible. 
Combinatorially, this yields six different ways of triangulating a prism.

\begin{figure}[!h]
\begin{center}
\framebox[5.0in]{
\iftoggle{arxiv}{\includegraphics[scale=0.5]{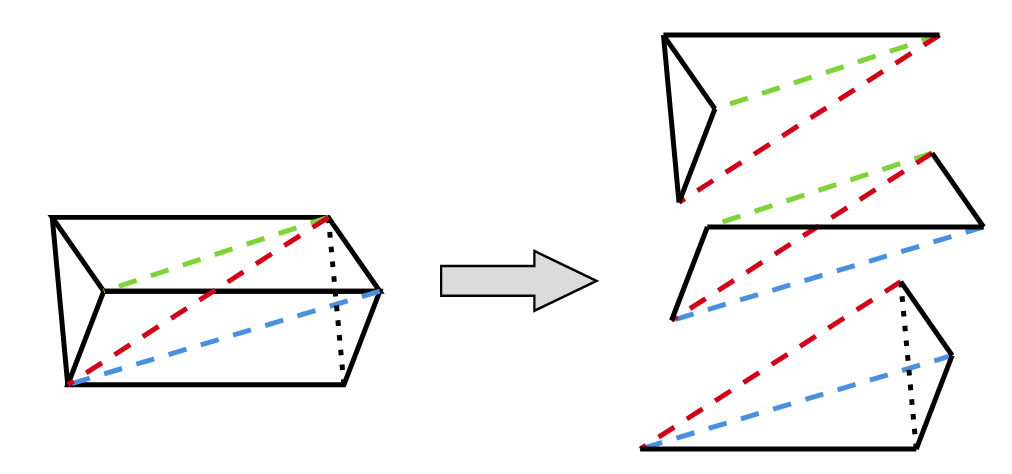}}
{\includegraphics[scale=0.33]{fig/valid-prism.png}}
}
\end{center}
\caption{A valid configuration for prism decomposition.}
\label{fig:valid-prism}
\end{figure}

The issue due to \cref{obs: meet-vertex} is ``fixed'' by \cite{erleben2005adaptive} by looking at the neighboring prisms and changing their configurations to transform these into the non-degenerate cases. We consider a similar strategy for cubes as well, but as we will see, this may not be possible for some global mesh configurations.

\section{General Hex-to-Tet: A General Algorithm for Tetrahedralizing a Hexahedral Complex}
\subsection{Generalizing Prism Decomposition to Cubes}\label{sec: prism-decompose}
Before discussing the generalization of prism decomposition to cubes, we need to clarify some terminology. Recall that the marching tetrahedra algorithm involves partitioning a cube into six irregular tetrahedra by making three cuts along shared diagonals of opposing faces, resulting in the division of the cube into halves three times \cite{doi1991efficient}. We call this shared diagonal the {\it main diagonal}. Further, we extend the notion of rising and falling cuts to cubes as follows. As in the case of prisms, we label the orientation of an external face cut as rising ({\bf R}) or falling ({\bf F}) by traversing along the extruded face in a counterclockwise manner.

Now, a trivial observation that any cube can be divided into two prisms by simply cutting across a diagonal plane allows us to partially reduce arbitrary tetrahedron decomposition of a cube to a decomposition of prisms. Consequently, we can separately triangulate each prism with the main diagonal split serving as a face cut for both prisms. Recall that our main goal was to allow the user to arbitrarily select the face cuts across the six faces; our only freedom being able to choose the main diagonal. We claim that this procedure {\it always} works for cases where up to three cuts have been predetermined. For more than three predetermined cuts, the procedure works if the cuts are lined up accordingly.

For up to three predetermined cuts, we can use the prism decomposition method without running into the degenerate cases from \cref{subsec: prism} as we will always have at least two outside face cuts to choose from. We can choose these cuts in such a way that, along with the main diagonal, these cuts guarantee that we get two cuts that meet at a vertex in each of the prisms, utilizing \cref{obs: meet-vertex}. 

Obviously, solving this case-by-case does not necessarily mean that we obtain a general algorithm for the entire mesh as the cases only correspond to a single hexahedron. However, for simplicity, we will first specify this case-by-case below, and present the main algorithm in \cref{sec:main}. Further, for the sake of exposition, we represent an arbitrary hexahedral element as a cube since they are topologically the same.

\subsubsection{Zero or One Predetermined Cut}
We simply run the marching tetrahedron algorithm here. Or, we can choose an arbitrary main diagonal along with the face cuts in each of the prisms so that prism decomposition can be performed.

\subsubsection{Two Predetermined Cuts}\label{sec: two-pre}
If the two cuts are not opposite to one another, then we can still run the marching tetrahedron algorithm. This is also possible if the two opposite cuts are both falling (\(\mathbf{F}\)) or both rising (\(\mathbf{R}\)). 

Crucially, even in the case where the two cuts have opposite orientations, we can choose the main diagonal so that its end points meet the endpoints of the predetermined cuts (\cref{fig:two}).

\begin{figure}[!h]
\begin{center}
\iftoggle{arxiv}{\framebox[5.0in]{
\includegraphics[scale=0.4]{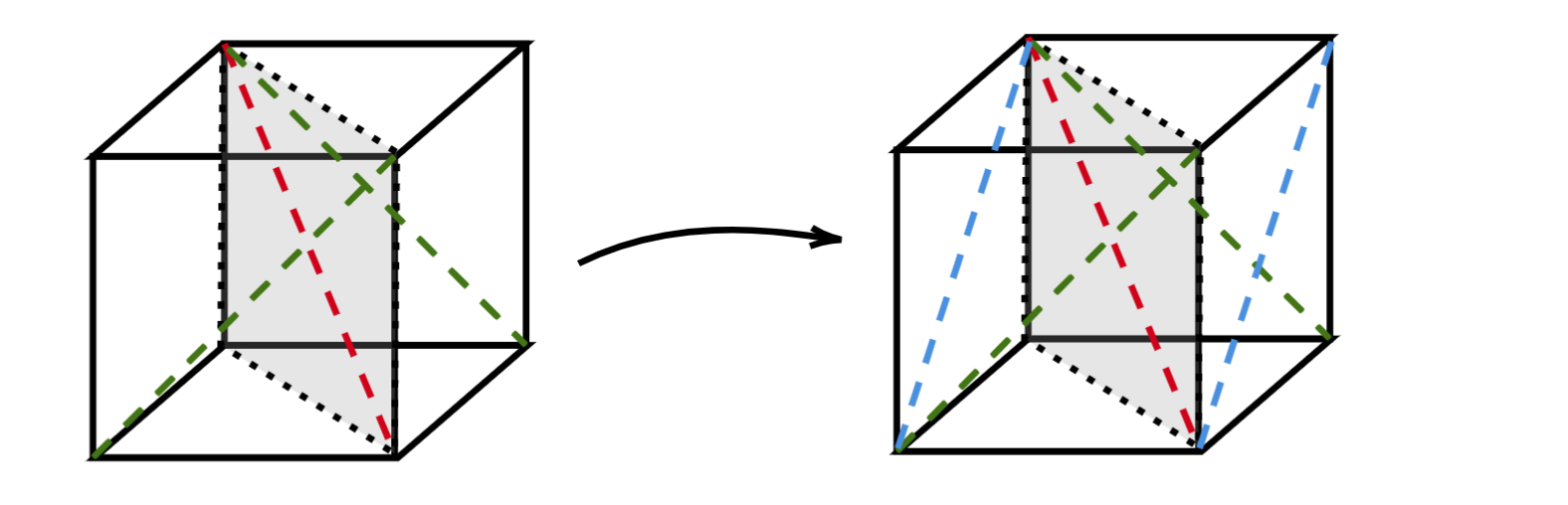}}}{\framebox[3.0in]{
\includegraphics[scale=0.29]{fig/two.png}}
}
\end{center}
\caption{Two predetermined cuts (green): cutting with the opposite faces in the same orientation (blue) yields \(\mathbf{FFR}\) for both prisms.}
\label{fig:two}
\end{figure}

\subsubsection{Three Predetermined Cuts}\label{sec: three-pre}
Combinatorially, the orientation of these face cuts induce the following three cases that we handle separately:

\paragraph{None of the cuts are opposite to one another}
In this case, we should be able to select the opposite cuts for each of the three predetermined cuts. Hence, we can apply marching tetrahedron to get the canonical subdivision.

\paragraph{A pair with opposite face cuts with same orientation}
This is again trivial: we can simply use marching tetrahedron with one of the pairs having already been determined.

\paragraph{A pair with opposite face cuts with {\it different} orientation}
This case employs the following procedure where we want to solve the problem of different orientation by carefully  decomposing the cube into prisms (\cref{fig:three}):

\begin{enumerate}
    \item Choose the uncut pair and decompose the cube into prism by cutting this pair to form the diagonal plane.
    \item On whichever prism the third predetermined cut falls, the main diagonal is cut to avoid the prism degenerate case(s).
    \item The second prism has an uncut external face, which is again used to avoid the prism degenerate case(s).
\end{enumerate}

\begin{figure}[!h]
\begin{center}
\iftoggle{arxiv}{\framebox[5.0in]{
\includegraphics[scale=0.35]{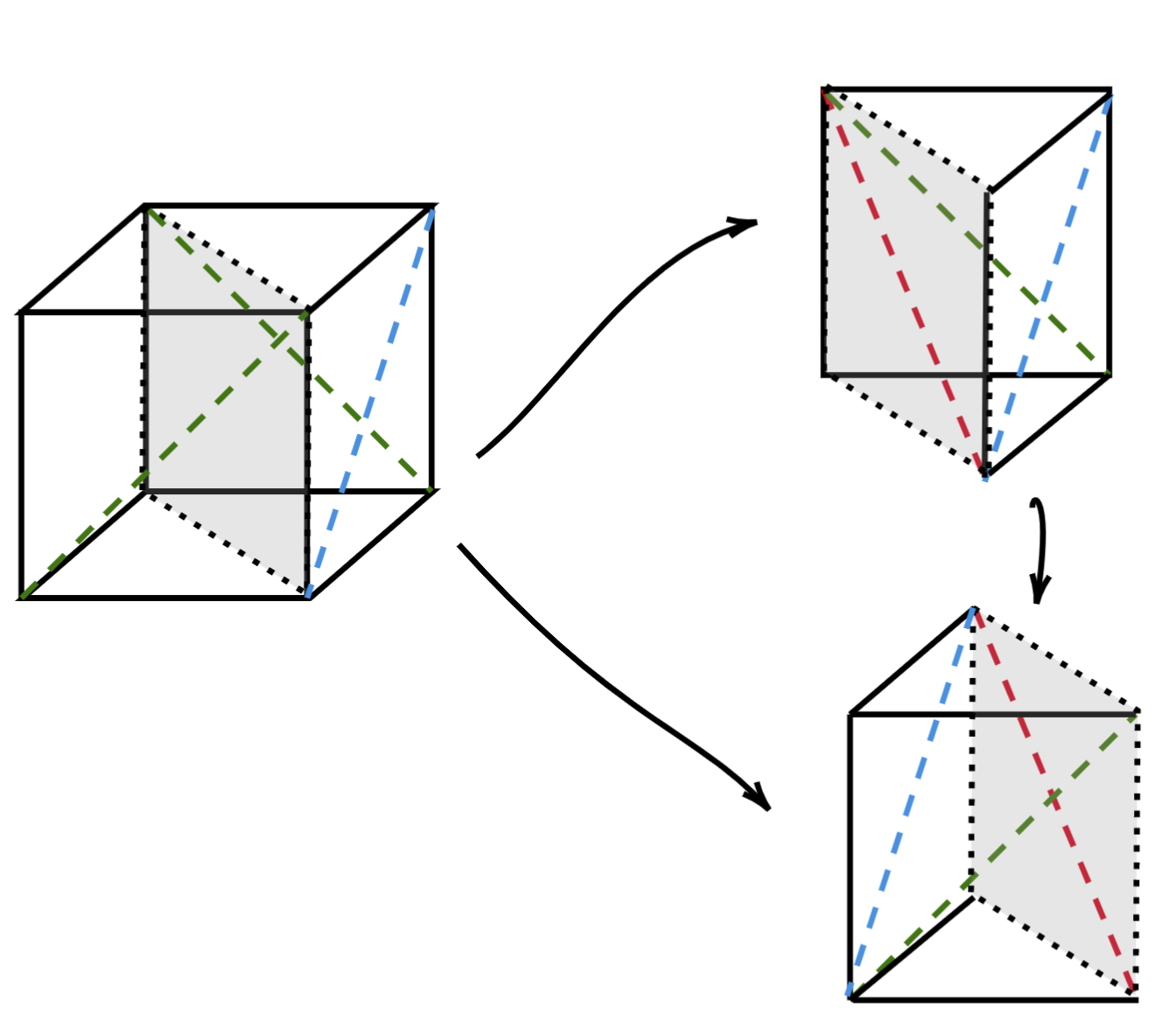}}}
{\framebox[3.0in]{
\includegraphics[scale=0.3]{fig/three.png}}}
\end{center}
\caption{Three predetermined cuts: two of the three cuts form a pair with opposite orientation. ({\it Left}) We first choose a dividing plane. (\textit{Upper-Right}) Then, the red diagonal is chosen
to configure the prism with two external predetermined cuts to \(\mathbf{FFR}\). (\textit{Lower-Right}) Finally, the blue cut is then picked in the second prism
to avoid the degenerate cases.}
\label{fig:three}
\end{figure}

\subsubsection{Four and Five Predetermined Cuts}
Again, we will handle the easy cases first. Also, note that at least one of the pairs of cuts must be opposite to one another here (indeed, two of the pairs must be opposite to one another in the case of five predetermined cuts.)

\paragraph{At least one pair opposite of each other with the same orientation}
This case is easy as we can choose such a pair to create the diagonal plane for decomposing the cube into prisms. If two of the remaining cuts lie on the same prism, then we use the diagonal cut to avoid the degenerate case which leaves us with two (or one) remaining cuts in another prism. If only one cut is present in any one of the prisms, then again we can easily avoid \(\mathbf{RRR/FFF}\) cases. 

\paragraph{Only one pair opposite of each other with different orientation}
We can choose one of the adjacent predetermined cuts and cut across its opposite face with the same orientation. This allows us to choose a diagonal plane with one of the resulting prisms containing two of the predetermined cuts. Here, we can again use the main diagonal cut to escape the degenerate case, while the remaining prism has an extra uncut external face (\cref{fig:one-different}).

\begin{figure}[!h]
\begin{center}
\iftoggle{arxiv}{
\framebox[5.0in]{
\includegraphics[scale=0.3]{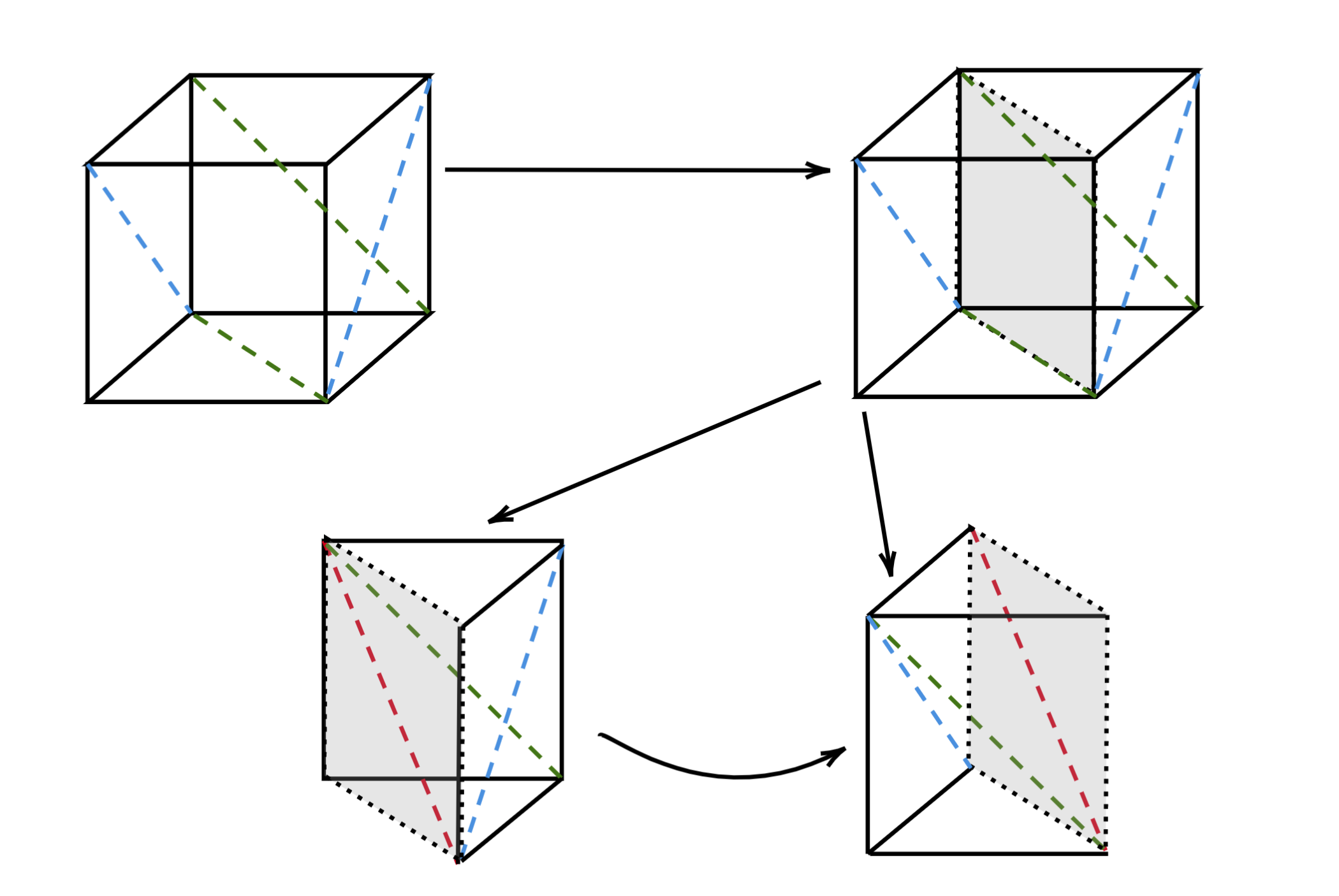}}
}{
\framebox[3.0in]{
\includegraphics[scale=0.2]{fig/one-different.png}}
}

\end{center}
\caption{Decomposition of a single pair with different orientations: (\textit{Top}) The side opposite to the green
cut is chosen so that two of the
predetermined cuts get isolated
in one of the prisms. (\textit{Bottom-Left}) The red diagonal is chosen
to configure the prism to \(\mathbf{FFR}\). (\textit{Bottom-Right}) The green cut is then picked
to avoid the degenerate cases.}
\label{fig:one-different}
\end{figure}

\paragraph{Both pairs are opposite of one another with different orientation}
\subparagraph{Meet at a vertex:} If the pairs meet at a vertex while being in different orientation to their opposite cuts, then we can simply decompose into prisms using the remaining uncut pair (we can cut in the same orientation as the remaining determined cut in the five case), and the resulting prisms are obviously non-degenerate as the external faces meet at a vertex (\cref{fig:four-cuts}).

\begin{figure}[!h]
\begin{center}
\iftoggle{arxiv}{
\framebox[5.0in]{
\includegraphics[scale=0.4]{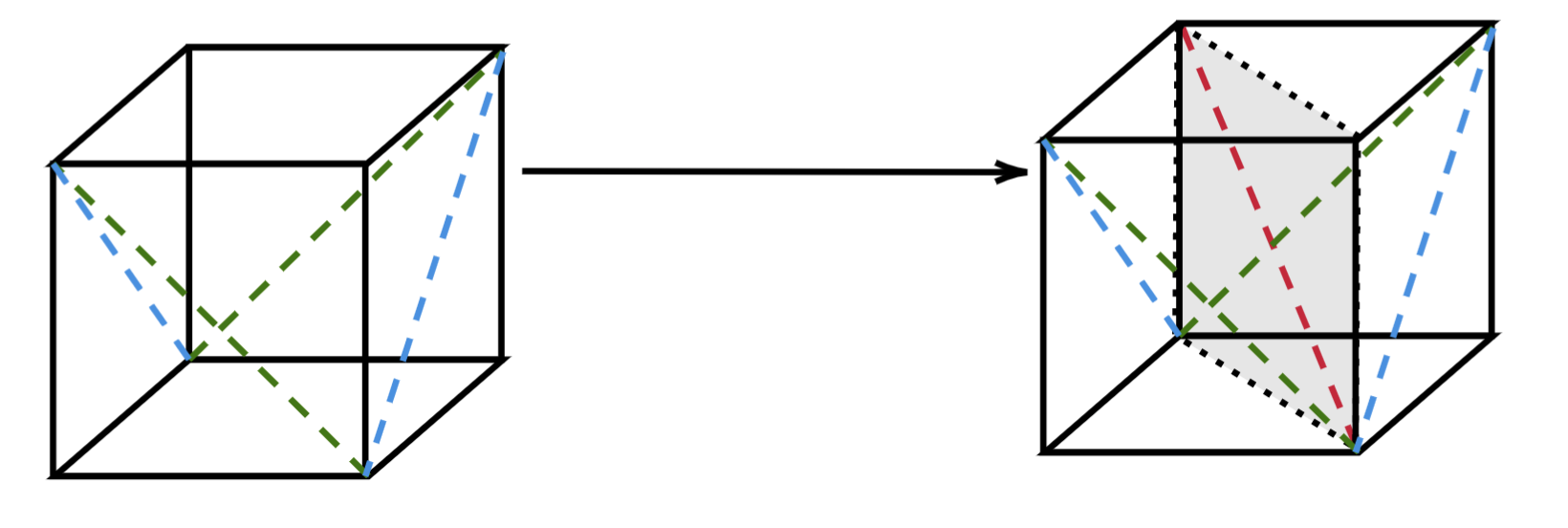}}
}{
\framebox[3.0in]{
\includegraphics[scale=0.27]{fig/four-cuts.png}}
}
\end{center}
\caption{Decomposition of both pairs meeting at a vertex with different orientations: This ensures that we have a pair of opposite uncut faces. (\textit{Right}) Consequently, we can decompose into prisms and choose the red diagonal cut to avoid the degenerate case.}
\label{fig:four-cuts}
\end{figure}

\subparagraph{None meet at a vertex} This is the {\it degenerate case}, where prism decomposition fails. Geometrically, it is equivalent to having a \(\mathbf{RRR/FFF}\) case as in Observation \ref{obs: meet-vertex} for cubes. Note that the remaining one/two cuts cannot save this from being degenerate (\cref{fig:degen-five}).

\begin{figure}[!h]
\begin{center}
\iftoggle{arxiv}{
\framebox[5.0in]{
\includegraphics[scale=0.5]{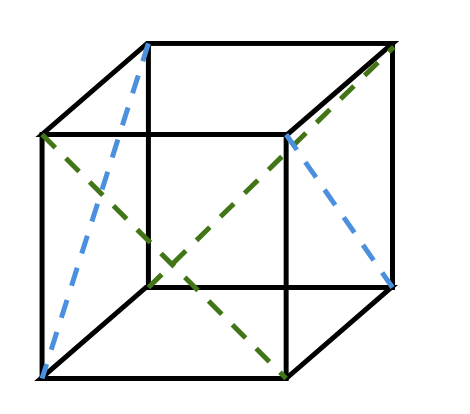}}
}{
\framebox[3.0in]{
\includegraphics[scale=0.4]{fig/degen-five.png}}
}
\end{center}
\caption{Degenerate case for a cube with four determined cuts. The case for degenerate case for cubes with more than four determined cuts is omitted as once four of the exterior face cuts are degenerate as in the figure, the remaining face splits cannot save it from degenerate.}
\label{fig:degen-five}
\end{figure}


\subsubsection{Six Predetermined Cuts}
In this case, any two opposite cuts with the same orientation imply that prism decomposition works. A procedure can be carried out accordingly as in the above cases. 
Again, if any cut is isolated, then the cube cannot be triangulated as Observation \ref{obs: meet-vertex} comes into play.

\subsection{Decomposition into Five Tetrahedra}\label{sec: five-tetrahedra}
Interestingly, even if prism decomposition fails, if at least two opposite pairs have different orientation, then when all pairs meet at (some) vertices to one another, we have the following five tetrahedral decomposition (\cref{fig:five-tets}). We recall that the RF graphs from \cite{meshkat2000generating} also contained cases with decomposition intro five tetrahedra. This is the concrete manifestation of such a decomposition.

\begin{figure}[!h]
\begin{center}
\iftoggle{arxiv}{
\framebox[5.0in]{
\includegraphics[scale=0.3]{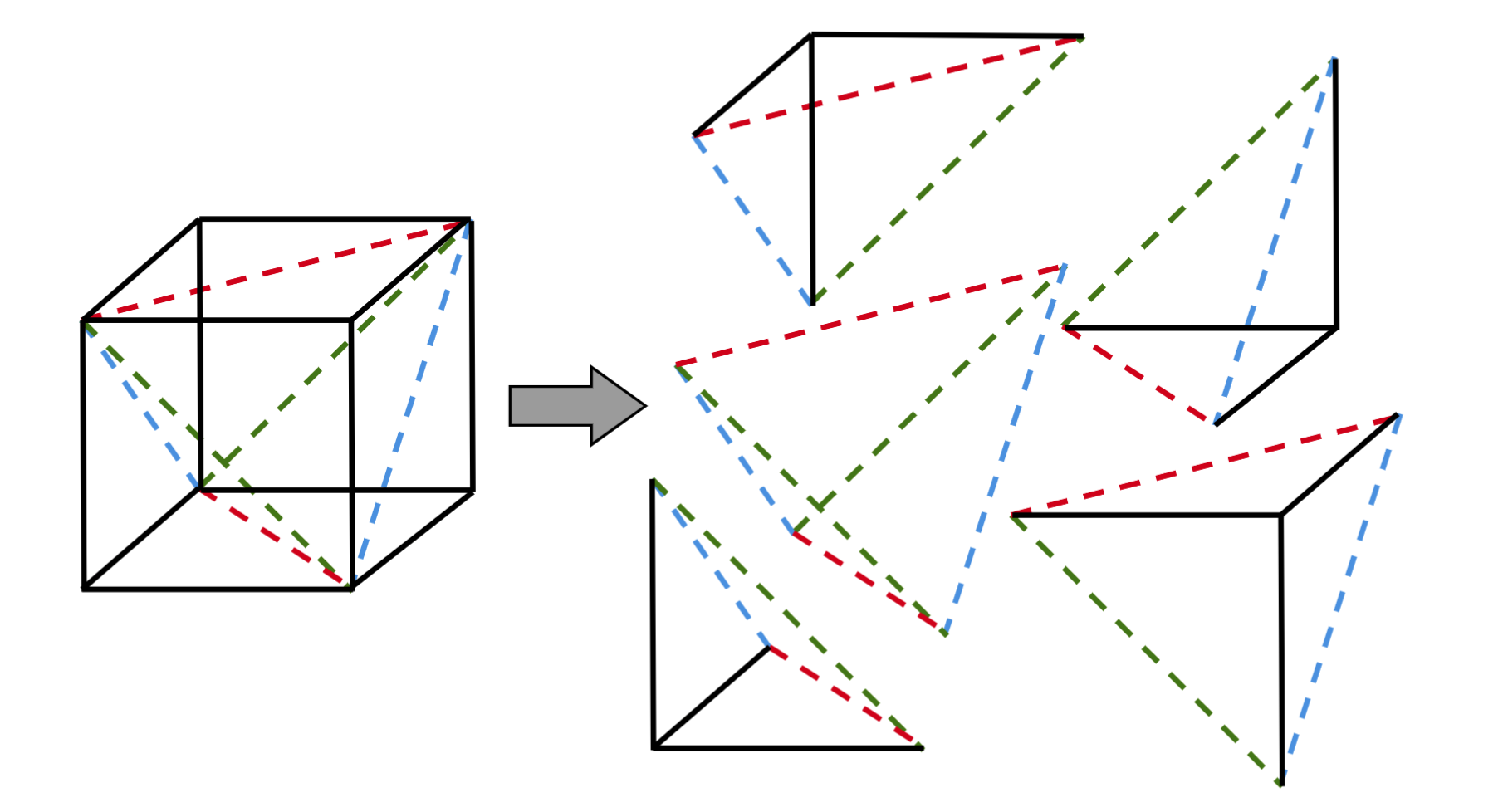}}
}{
\framebox[3.0in]{
\includegraphics[scale=0.26]{fig/five-tets.png}}
}
\end{center}
\caption{The five tetrahedral decomposition: (\textit{Right}) An implementation of this decomposition can be done by simply removing the central tetrahedron; the rest of the four tetrahedra are distributed along the four corners.}
\label{fig:five-tets}
\end{figure}

\subsection{Solving the Degenerate Cases}\label{sec: degenerate}
As we outlined above, the cubes fail to be triangulated in the usual way only if one or more cut(s) are isolated. We claim that this cannot be resolved using one of the procedures above, and one of the following methods must be followed:

\subsubsection{Flipping Neighboring Cubes}\label{sec: flips}
Recall that we had several degrees of freedom when choosing one of the cuts when decomposing and later when avoiding the prism degenerate cases. Indeed, the procedure above is invariant with respect to opposite face cuts of the same orientation. That is, we have the following guarantee on the invariance of flipping cuts of neighboring cubes. 

\begin{observation}\label{thm: opp-adj}
  Changing the orientations of a pair of opposite face cuts still yields a valid tetrahedral decomposition into six tetrahedra.
\end{observation}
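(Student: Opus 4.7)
Given a cube that carries a valid six-tetrahedron decomposition produced by the prism procedure of \cref{sec: prism-decompose}, write $\Pi$ for the diagonal plane used to split the cube into two prisms and $\delta$ for the main diagonal on $\Pi$. The plan is to analyze the effect of flipping a pair of opposite face cuts by case analysis on the role that pair plays with respect to the chosen decomposition, and to reapply the machinery of \cref{sec: prism-decompose} together with \cref{obs: meet-vertex} in each case.

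\textbf{Case A: the flipped pair defines $\Pi$.} Flipping both cuts replaces $\Pi$ by the other diagonal plane $\Pi'$ that still contains the same two opposite faces but passes through the other pair of parallel cube edges. The remaining four face cuts are unchanged, and by the geometry of the split they redistribute across the two new prisms in a mirror-symmetric fashion. We simply reapply the procedure of \cref{sec: prism-decompose} with $\Pi'$ and pick a new main diagonal $\delta'$ if needed; the non-degeneracy analysis carries over verbatim.

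\textbf{Case B: the flipped pair is transverse to $\Pi$.} Each of the two flipped cuts remains an external rectangular face cut of exactly one of the two prisms (one cut per prism). We keep $\Pi$ and exploit the freedom in choosing $\delta$: in each prism we have three external face cuts (two inherited from the cube plus the main-diagonal choice), and by \cref{obs: meet-vertex} we need only two of them to meet at a vertex. Flipping one cube-face cut in a prism changes which pairs of its external cuts meet; we then re-pick $\delta$ so as to restore the vertex-meeting property in both prisms.

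\textbf{Main obstacle.} The delicate step is Case B: showing that the single choice of $\delta$, shared between the two prisms, can simultaneously avoid $\mathbf{RRR}/\mathbf{FFF}$ on both sides after the flip. I would handle this with a finite case check over the possible prism configurations, observing that the original non-degeneracy already constrains the cube-face cuts enough that at least one of the two orientations of $\delta$ remains compatible with both prisms after a single opposite-pair flip. Morally, the flip is a local symmetry of the configuration space that trades one vertex-meeting pattern for another of the same type, so the total tetrahedron count stays at six.
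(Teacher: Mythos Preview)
Your Case~A is exactly the paper's argument: since the flipped pair has the \emph{same} orientation, the paper simply takes that pair to be the diagonal plane, so flipping both cuts amounts to replacing $\Pi$ by the other diagonal plane $\Pi'$ through the same two faces. The paper then argues by contradiction: if no choice of main diagonal $\delta'$ avoided $\mathbf{RRR}/\mathbf{FFF}$ on both new prisms, the four remaining exterior cuts would all have to carry the same label, and that same obstruction would already have blocked the original decomposition.

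The gap in your proposal is Case~B. Keeping the old $\Pi$ and merely re-choosing $\delta$ does not always succeed, and your deferred ``finite case check'' would in fact uncover a genuine obstruction rather than close the argument. Concretely: with $\Pi$ through Top/Bottom the two prisms receive external pairs $\{\text{Front},\text{Right}\}$ and $\{\text{Back},\text{Left}\}$. If before the flip these read $(\mathbf{R},\mathbf{F})$ and $(\mathbf{R},\mathbf{F})$, any $\delta$ works. Flipping the opposite pair Front/Back (same orientation $\mathbf{R}\to\mathbf{F}$) turns both prisms into $(\mathbf{F},\mathbf{F})$; now each prism separately forces $\delta=\mathbf{R}$ in its own orientation, but the two prisms see the shared diagonal with opposite labels, so no single $\delta$ works. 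Your ``re-pick $\delta$'' strategy fails here.

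The fix is not a longer case check but the observation you are missing: because the flipped pair has the same orientation, it is itself eligible to serve as the diagonal plane. Switching to $\Pi=$ Front/Back immediately dissolves the obstruction above and reduces Case~B to Case~A. That is precisely why the paper never introduces a Case~B at all.
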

\begin{proof}
We want to show that triangulation is invariant under flipping opposite pairs with the same orientation. However, this is essentially changing the diagonal plane that yielded those prisms. 

Assume however that after flipping the orientation, one of the new prism acquires a degenerate configuration (\textbf{RRR/FFF}). In order for this to happen, we must be constrained to cut the main diagonal in some orientation \textbf{R} (respectively \textbf{F}) for one of the prisms with other cuts having orientations \textbf{FF} (respectively \textbf{RR}). The main diagonal will then have orientation \textbf{F} (respectively \textbf{R}), for the other prism, yielding \textbf{FFF} (respectively \textbf{RRR}). 
But, this implies that we started with a degenerate \textbf{FFFF/RRRR} exterior face cuts, which must be impossible as this would not have yielded a valid decomposition before flipping.
\end{proof}

In light of \cref{thm: opp-adj}, in order to avoid the degenerate case for four predetermined case, note that changing the orientation of one of the external face cuts suffices. Here, we note that recursively carrying out this flipping of orientation of face cuts may lead to a "chain reaction" that may end up changing the entire mesh. Thus, we only look for these flips in adjacent cubes and avoid employing this method if the neighbor of the adjacent cube has predetermined face cuts in favor of both simplicity and efficiency.

\begin{figure}[!h]
\begin{center}
\iftoggle{arxiv}{
\framebox[5.0in]{
\includegraphics[scale=0.32]{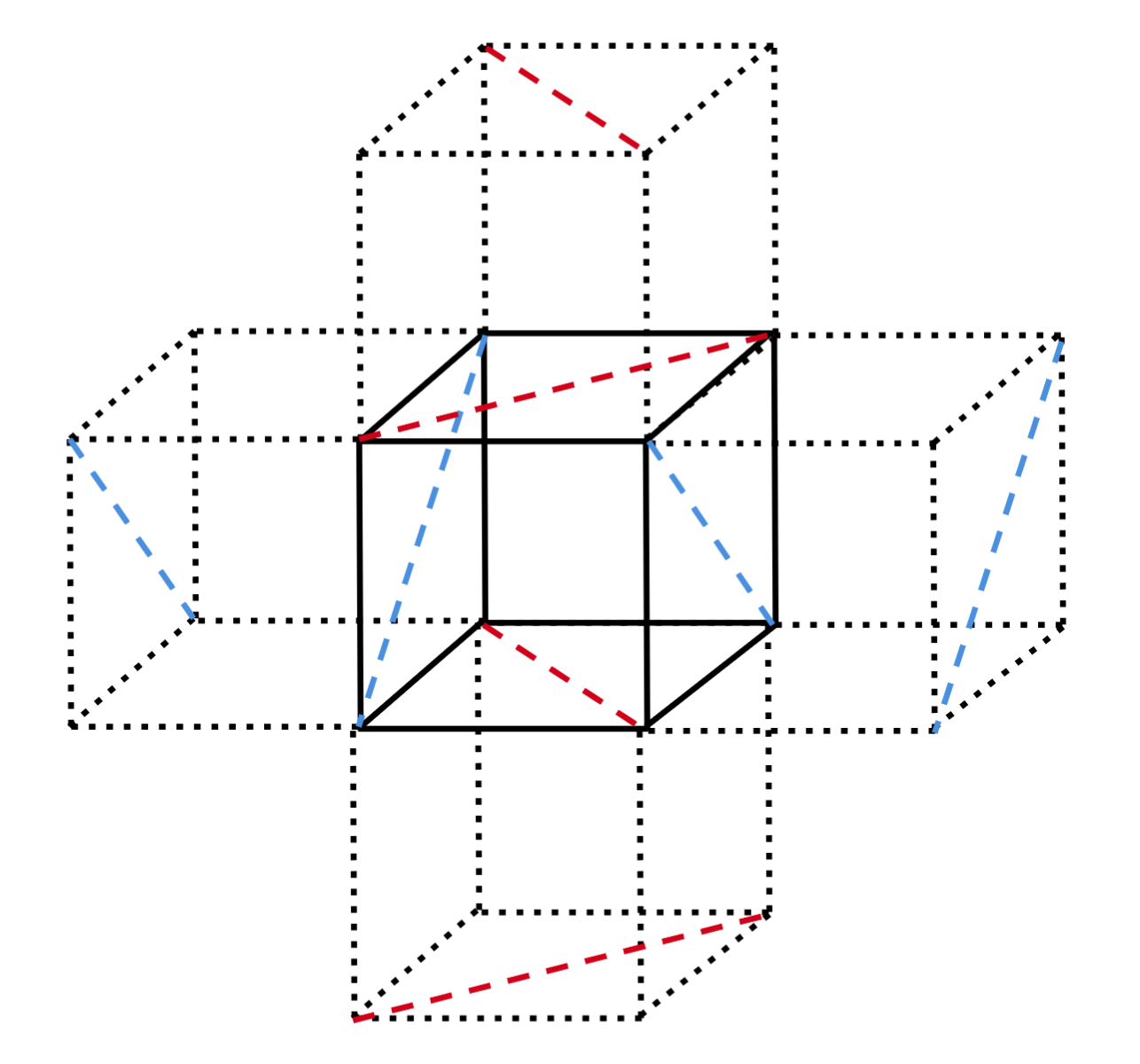}}
}{
\framebox[3.0in]{
\includegraphics[scale=0.27]{fig/flip.png}}
}
\end{center}
\caption{Degenerate case for cubes when flipping opposite face cuts of adjacent cubes: This is an instance when flipping fails as all four adjacent
cubes’ opposite faces have cuts with different orientations.}
\label{fig:flip}
\end{figure}

\subsubsection{Steiner Points}\label{sec: steiner}
Sometimes even flipping fails (\cref{fig:flip}), and our last resort is introducing new vertices, called {\it Steiner points}, which presents an easy solution to the above problem as any number of predetermined cuts can be triangulated to form 12 tetrahedra \cite{de2000computational}: All eight original vertices are connected to the Steiner point to decompose the cube into six pyramids (\cref{fig:steiner}). Any of the face cuts now yields two tetrahedra.

\begin{figure}[!h]
\begin{center}
\iftoggle{arxiv}{
\framebox[5.0in]{
\includegraphics[scale=0.4]{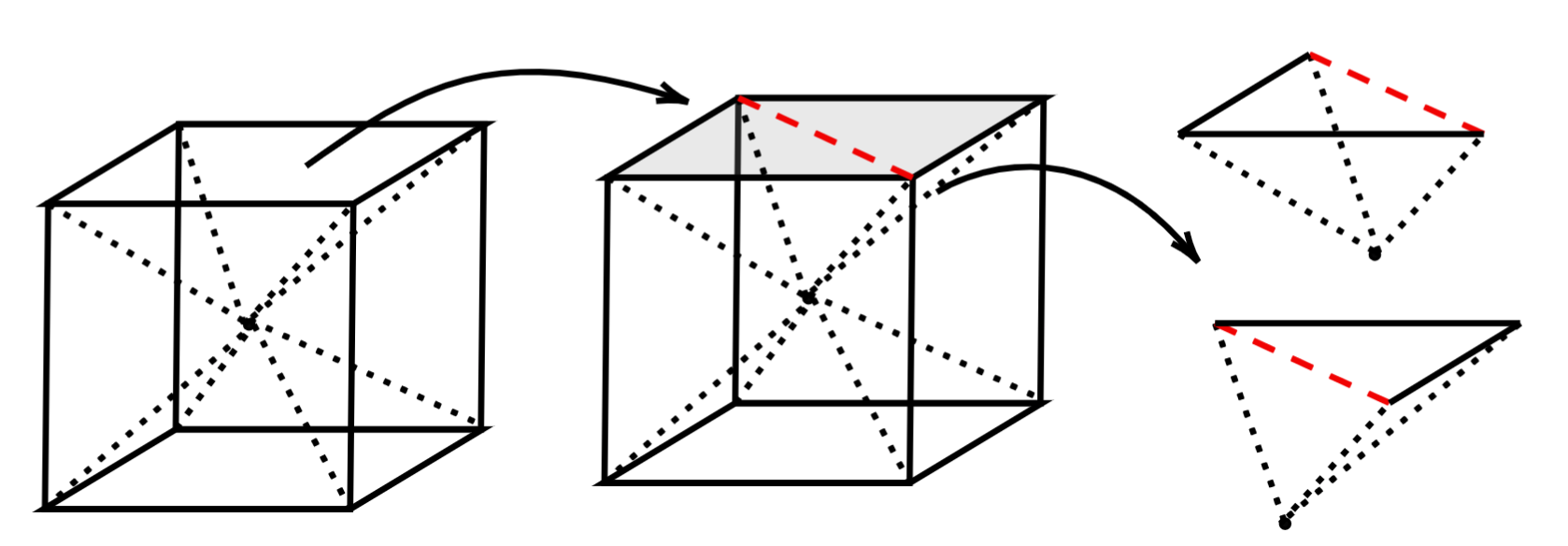}}
}{
\framebox[3.0in]{
\includegraphics[scale=0.26]{fig/steiner.png}}
}
\end{center}
\caption{Steiner points allows decomposition into twelve tetrahedra at the expense of an additional vertex: (\textit{Left}) We first connect each of the vertices
with the Steiner point. (\textit{Middle, Right}) This yields six pyramids, which is then cut across to get
tetrahedra using the predetermined cuts (red).}
\label{fig:steiner}
\end{figure}

\subsection{The Main Algorithm}\label{sec:main}
We have now handled all the cases with certain number of predetermined cuts. We note here that it is quite surprising that the methods used to do this all retain a certain level of simplicity. Indeed, the key idea is knowing the right combination of methods to apply in different cases. However, directly translating these mechanisms into an ad~hoc implementation would not work. Moreover, ignoring the intricacies between the methods may mean that we sacrifice both parallelizability and efficiency in terms of additional vertices. Nevertheless, any attempt at designing an algorithm has to face the degenerate cases; the simplest counterexample is a hexahedralized torus with four cubes where the face cuts are configured in a way that forces the configuration similar to the one from \cref{fig:flip}. 

This counterexample shows that this remains a non-trivial problem to arrange all these cases in a way that allows parallelizability while also ensuring that we use as few Steiner points as possible. We now present a succinct version of such an algorithm that achieves these goals.

   \begin{algorithm}[h]
			\caption{\texttt{Hex-to-Tet (A hexahedral mesh $\mathcal{M}$)}}
			\begin{algorithmic}[1]
			    \While{there exists a hexahedron $H$ that is unmarked}
			        \State $N \leftarrow$ number of exterior cut faces of $H$
			        \If{$N \geq 4$ with two pairs of opposite face cuts with different orientation}
			            \If{the two pairs meet at the same vertices} \Comment{\cref{sec: five-tetrahedra}.}
                                 \While{there exists an uncut face}
			                            \State Cut the face so that the cut meets at the predetermined cut(s) at some vertex 
			                    \EndWhile
                            \Else \Comment{\cref{sec: degenerate}.}
                                \State \texttt{Degenerate-Case ($H$)}
                            \EndIf
			        \Else \Comment{\cref{sec: prism-decompose}}
			            \State \texttt{Prism-Decomposition ($H$)}
			        \EndIf	
			        \State mark $H$
			    \EndWhile
			    \State \Return ``Done''
			\end{algorithmic}
		\end{algorithm}

            \begin{algorithm}[h]
			\caption{\texttt{Prism-Decomposition (Hexahedron $H$)}}
			\begin{algorithmic}[1]
			    \If{there does not exist a pair of opposite cuts with same orientation}
			       \State Cut one of the uncut pairs in this manner
			    \EndIf
			    \State Cut across such pair to create a diagonal plane and two prisms.
			    \If{there exists a prism with two of the exterior faces cut}
			        \State Use the middle diagonal to avoid \(\mathbf{RRR/FFF}\)
			    \EndIf
			    \State Cut the remaining uncut faces of the prism to get valid decompositions
			\end{algorithmic}
		\end{algorithm}
		
	\begin{algorithm}[!h]
			\caption{\texttt{Degenerate-Case (A Hexahedral Mesh $\mathcal{M}$)}}
			\begin{algorithmic}[1]
			    \State Get the adjacent cubes of the four faces with each pair having different opposite orientation
			    \If{any of the four cuts $C$ form an opposite cut pair with the {\it same} orientation in their adjacent cubes} \Comment{\cref{sec: flips}}
			         \State $H' \leftarrow$ neighbor of $H$ that shares the cut $C$
                      \State $C' \leftarrow$ face cut in $H'$ opposite to $C.$
                      \State $H'' \leftarrow$ neighbor of $H'$ that shares the cut $C'$ 
                      \If{$H''$ has {\it not} been marked}
                        \State Flip the orientation of the cut $C'$
                      \EndIf
			    \Else \Comment{\cref{sec: steiner}}
			        \State Introduce a Steiner point $P$ in $H$.
			        \State Link each vertex of $H$ with $P$ using six new interior edges
			        \While{there exists an uncut exterior face}
			             \State Cut the face so that the cut is in the same orientation as its opposite face 
			        \EndWhile
			    \EndIf
			\end{algorithmic}
	\end{algorithm}

\section{Conclusion and Future Work}
\label{sec: conclusion}
In this paper we presented a general triangulation algorithm for hexahedral meshes. Instead of imposing restrictions on the input mesh like other existing algorithms do, our algorithm does not depend on a predefined set of face cuts. Further, our algorithm identifies the number of predetermined face divisions and uses an extension of prism decomposition algorithm and several other techniques to decomposition the hexahedra into tetrahedra. Crucially, we have ensured that our algorithm tries to find all the valid decompositions without making any assumption on the orientations of the face splits, before employing additional vertices. Finally, contrary to previous works, the theoretical framework we inherited extends well to implementation, and in future work, we plan to implement the algorithm above in the PETSc~\cite{petsc-user-ref,petsc-web-page} libraries in order to convert meshes with tensor product cells to simplicial cells as part of its DMPlex mesh capabilities~\cite{KnepleyKarpeev09,LangeMitchellKnepleyGorman2015,KnepleyLangeGorman2017}.	

\paragraph{Acknowledgments}
We gratefully acknowledge the support of the Computational Infrastructure for Geodynamics project NSF EAR-0949446, as well as support from the Department of Energy Applied Math Research program under U.S. DOE Contract DE-AC02-06CH11357.

We would also like to thank the anonymous reviewers for providing helpful comments on earlier drafts of the manuscript. We are particularly indebted to the reviewers for bringing to our attention an error present in the proof of Observation 2.1, which has since been rectified.

\bibliography{main}

\end{document}